\theoremstyle{break}
\newtheorem{theorem}{Theorem}[section]
\theoremstyle{plain}
\newcommand{\dd}{\mathrm{d}}
\newcommand{\kk}{\mathbf{k}}
\newcommand{\ii}{\mathrm{i}}
\newcommand{\e}{\mathrm{e}}
\newcommand{\oo}{\mathrm{o}}
\newcommand{\re}{\mathop{\mathrm{Re}}}
\newcommand{\im}{\mathop{\mathrm{Im}}}
\newcommand\qedsymbol{\hbox{\rlap{$\sqcap$}$\sqcup$}}
\newenvironment{proof}{\textbf{Proof.}}{\\ \mbox{ }\hfill\qedsymbol\\}
\renewcommand{\vec}[1]{\mathbf{#1}}
\renewcommand{\im}{\mathop{\mathrm{Im}}}
\renewcommand{\re}{\mathop{\mathrm{Re}}}
\newcommand{\loc}{\mathrm{loc}}
\newcommand{\D}{\mathrm{D}}
\newcommand{\I}{\mathrm{I}}
\newcommand{\upi}{\uppi}
\begin{document}

\begin{frontmatter}

\author[MS]{M. Seredy\'{n}ska\corref{CA}}
\address[MS]{
Institute of Fundamental Technological Research,
Polish Academy of Sciences\\
 ul. Pawi\'{n}skiego 5b,
02-106 Warszawa, PL}
\ead{msered@ippt.gov.pl}

\author[AH]{Andrzej Hanyga}
\address[AH]{
ul. Bitwy Warszawskiej 14 m. 52, 
02-366 Warszawa, PL}

\ead{ajhbergen@yahoo.com}

\cortext[CA]{Corresponding Author}

\date{30.04.2010}

\title{Theory of attenuation and finite propagation speed in 
viscoelastic media}

\bibliographystyle{elsarticle-num}

\begin{abstract}
It is shown that the dispersion and attenuation functions in a linear 
viscoelastic medium 
with a positive relaxation spectrum 
can be expressed in terms of a positive measure. Both functions have a sublinear 
growth rate at very high frequencies. In the case of power law attenuation
positive relaxation spectrum ensures finite propagation speed. For more general
attenuation functions the requirement of finite propagation speed imposes a 
more 
stringent condition on the high-frequency behavior of attenuation. 
It is demonstrated that superlinear power law frequency dependence of attenuation 
is incompatible with finite speed of 
propagation and with the assumption of positive relaxation spectrum. 
\end{abstract}

\begin{keyword} viscoelasticity \sep wave propagation \sep dispersion \sep power law \sep

attenuation \sep dispersion relation \sep bio-tissue \sep polymer

\MSC[2000]  35B09 \sep  35Q74 \sep 74D05
\end{keyword}

\end{frontmatter}

\pagebreak

\noindent\textbf{List of symbols.}\vspace{0.2cm}\\
\begin{tabular}{lll}
$[a,b]$ & $\{ x \mid a \leq x \leq b \}$ & for $a < b < \infty$;\\
$]a,b]$ & $\{ x \mid a < x \leq b \}$ & for $-\infty \leq a < b < \infty$;\\
$]a,b[$ & $\{ x \mid a < x  < b \}$ & for any $-\infty \leq a <  b \leq \infty$;\\ 
$\mathbb{R}_+$ & $]0,\infty[$ & the set of positive reals $x$; \\
$\mathcal{F}_x(f)$ & $\int_{-\infty}^\infty \e^{\ii \vec{k}\cdot x} \, f(x) \, \dd x$ & Fourier transform;\\
$\mathcal{L}_t(f)$ & $\int_0^\infty \e^{-p t} f(t)\, \dd t$ & Laplace transform;\\
$\tilde{f}(p)$ & $\int_0^\infty f(t) \, \e^{-p t} \, \dd t$ & Laplace transform;\\
$\theta(t)$ &  & unit step function;\\
$t_+^\alpha$ & $\theta(t)\, \vert t \vert^\alpha$ & homogeneous distribution;\\
$\D^\alpha_{\mathrm{C}}$ & $\I^{n-\alpha}\, \D^n$ & Caputo fractional derivative,\\
$f \sim_a g$ &  $0 < \lim_{x\rightarrow a} \vert f/g \vert < \infty $;\\
$f = \oo[g]$ & $\lim_{x\rightarrow a} \vert f/g \vert = 0$ &at $a =0$ or $\infty$
\end{tabular}

\section{Introduction.}

Viscoelastic media of the Newtonian, hereditary or combined type are known to have 
positive relaxation spectra \citep{Molinari,Day70a,Day70b,AnderssenLoy02a,HanDuality}. 
Any network of springs and dashpots combined in parallel and serially has this property
\cite{HanISIMM}. This is also true of many 
theoretical models of polymers (the Rouse model \citep{Rouse53}, Zimm's reptation-based 
models \cite{Zimm56}, 
Generalized Gaussian models \citep{GurtovenkoBlumen05}), rubber \citep{Kuhn,LublinerPanoskaltsis}, metals \citep{Andrade2} and rocks \citep{MinsterAnderson}. Universality of this property 
has led to same attempts at explaining it by a general thermodynamic or "fading memory" principle 
\citep{Day70a,BerisEdwards93,AnderssenLoy02a}).
The same property is shared by some (but not all) dielectric media 
\citep{HanSerDielectrics07}. 

Positive relaxation spectrum has some consequences for wave dispersion and attenuation 
in such media.
We shall use the term "admissible" to refer to the attenuation and 
dispersion functions which are compatible with positive relaxation time spectrum.
It will be shown that every admissible pair of attenuation and dispersion functions
has an integral representation in terms of a positive Radon measure, which we call 
the dispersion-attenuation measure. The measure is arbitrary except for a 
very mild growth condition. The two expressions taken together can be considered 
as a dispersion relation in parametric form. 

In contrast the acoustic Kramers-Kronig (K-K) dispersion relations \cite{WeaverPao81} 
are non-local. The K-K dispersion relations express the dispersion function in terms 
of the attenuation function or 
conversely. This presupposes that one of these functions (usually the 
dispersion function) is accurately known and and is admissible. 
On the other hand, substituting any positive 
measure in the parametric dispersion relation yields an admissible
pair of dispersion and attenuation functions. 

There is no a priori argument in favor of the K-K dispersion relations in acoustics. 
In electromagnetic theory the K-K relations follow from the causality of the time-domain 
kernel appearing in the constitutive equations of a dispersive dielectric medium.
In acoustics the K-K relations are derived from an ad hoc assumption about the analytic 
properties of the wave number. It is namely assumed that the wave number is the 
Fourier transform of a causal function (or distribution) $L$. This assumption 
is not justified unless a physical meaning is given to the causal function $L$.
Attempts to give a physical meaning to the function $L$ in terms of an ad hoc wave 
equation are discussed in Sec.~\ref{sec:superlinear}. The wave equation is however 
not consistent with the constitutive equations of viscoelasticity unless an 
approximation are made. 

It will be also shown that in media with positive 
relaxation spectrum the frequency dependence of the attenuation function 
$\mathcal{A}_1(\omega)$
in the high frequency range is sublinear: $\mathcal{A}_1(\omega) = \oo[\omega]$. 
This fact raises some problems which we also try to address in this paper. 
In the case of power law attenuation 
the attenuation and dispersion are proportional to a power of frequency 
$\vert \omega\vert^\alpha$. Numerous experiments in acoustics indicate
that the power law accurately represents the frequency dependence
of dispersion and attenuation over several decades of frequency. 
If the power law extends to arbitrarily large frequencies then 
viscoelastic theory based on positive relaxation spectrum 
implies that the exponent of the power law satisfies the inequalities 
$0 \leq \alpha < 1$.  The same inequality follows from the requirement 
of finite speed of propagation. 

Investigations of creep and relaxation 
in viscoelastic materials 
always support the assumption of positive relaxation spectrum 
(e.g. \cite{Andrade1,Andrade2} for creep in metals, \cite{MinsterAnderson}
for the upper mantle) and therefore models derived 
from constitutive relations exhibit sublinear attenuation and dispersion 
at high frequencies. In particular this applies to the Cole-Cole, 
Havriliak-Negami, Cole-Davidson and Kohlrausch-Williams-Watts relaxation 
laws commonly applied in phenomenological rock mechanics, polymer
rheology \cite{BagleyTorvik4}, bio-tissue mechanics (e.g. for bone 
collagen \cite{SasakiAl93}) as well as for ionic glasses \cite{CariniAl84}.
It also applies to the most common constitutive model of bio-tissue 
based on power law creep\cite{DespratAl05} or to the soft-glassy 
model of cytoskeleton mechanics\cite{FabryAl2001,FabryAl03}.

Sublinear power law attenuation 
($\alpha < 1$) has also been reported in experimental investigations 
involving PVC and lava samples 
\cite{RibodettiHanygaGJI}. Experimental ultrasound investigations of 
numerous materials point however to higher
values of the exponent in the power law attenuation. For example in 
ultrasound investigations of soft 
tissues the exponent varies between 1 and 1.5, while in some viscoelastic fluids
such as castor oil it lies between 1.5 and 2. In multi-walled carbon nanotubes 
the exponent lies above 1.1 \cite{MobleyAll2009}.
We shall refer to this case as superlinear frequency dependence.
Typical values of the power law exponent in medical applications using 
ultrasound transducers 
are $\alpha = 1.3$ in bovine liver for $1\div 100~\mathrm{MHz}$, $\alpha = 1\div 2$ in
human myocardium and other bio-tissues \cite{SzaboWu00}. Values in the range $1\div 2$
are observed at lower frequencies in aromatic polyurethanes 
\cite{GuessCampbell95}. Nearly linear frequency dependence of attenuation 
is well documented in seismology \cite{Futterman}. Approximately 
linear frequency dependence of attenuation has been observed in 
geological materials in the range 140~Hz to 2.2~MHz. 

Several papers have been
devoted to a theoretical justification of the superlinear dispersion-attenuation 
models 
\cite{Szabo1,Szabo2,ChenHolm03,WatersHughersBrandenburgerMiller,KellyMcGoughMeerschaert08,CobboldSushilovWeathermon04,KellyMcGoughMeerschaert08}.
In order to resolve some problems Chen and Holm \cite{ChenHolm04} suggested to add a fractional Laplacian of the velocity field 
of order $y$, where $0 < y < 2$,  to the usual Laplacian of the displacement 
field in the equations of motion.  Their approach still leads to an unbounded sound speed for superlinear attenuation 
and adds a new problem: their wave equation cannot be derived from a viscoelastic 
constitutive equation. The underlying physical model of Chen and Holm's equation is 
for the time being unclear.

The main factor affecting viscoelastic behavior and viscoelastic wave propagation is relaxation represented by the relaxation modulus. This remains true for fractal media such as polymers and bio-tissues. In contrast to anomalous diffusion, viscoelasticity is incompatible with fractional Laplacians. 

Superlinear power law attenuation is incompatible with a finite upper limit
on propagation speed. This also applies to more general attenuation functions
with a linear or superlinear asymptotic growth in the high frequency range.
In fact we shall see that the boundary between finite and infinite speed 
of propagation corresponds to asymptotic attenuation $\sim a\, \omega/\ln(\omega)$.  
An abnormal feature of wave propagation in media with superlinear 
power laws (and more generally in media with superlinear asymptotic 
frequency dependence) is appearance of precursors. The precursors extend 
to infinity and thus the speed of propagation of disturbances is infinite. 

It is a challenging problem how to explain the incompatibility 
between the theory of viscoelasticity (or finite propagation speed) and experiment 
in the case of superlinear attenuation. 
It seems likely that the attenuation observed at ultrasound frequencies should 
be associated with an intermediate frequency range in which the attenuation 
function behaves
in a significantly different way from its asymptotic behavior. However even in 
this case problems remain. 
A superlinear power law cannot be strictly valid in any finite range 
of frequencies. Since the local exponent 
$\alpha_1(\omega) := \ln \mathcal{A}_1(\omega)
/\ln \omega$ is analytic for $\omega > 1$, it cannot be constant over 
a finite frequency range without being constant over the entire real line.
Hence a local superlinear power law entails superlinear asymptotic behavior 
which is incompatible with the requirements of positive relaxation time spectrum and 
finite propagation speed. 

\section{Constitutive assumptions and basic definitions.}

In viscoelasticity the relaxation modulus $G$, 
defined by the constitutive stress-strain relation
\begin{equation} \label{eq:0}
\upsigma(t) = \int_0^t G(t-s) \, \dot{e}(s)\, \dd s + \zeta \, \dot{e}(t)
\end{equation}
is assumed to have positive relaxation spectral density $h$. The latter 
statement means that for every $t > 0$ 
\begin{equation} \label{eq:1}
G(t) = G_0 + \int_0^\infty \e^{-t r}\, h(r)\, \dd \ln(r), \qquad t > 0
\end{equation}
where $r = 1/\tau$ is the inverse of the relaxation time and $h(r) \geq 0$. 
Eq.~\eqref{eq:1} represents 
a superposition of a continuum of Debye elements. For mathematical convenience 
eq.~\eqref{eq:1} will be replaced by a more general equation
\begin{equation} \label{eq:2}
G(t) = \int_{[0,\infty[\;} \e^{-t r}\,\mu(\dd r), \qquad t > 0
\end{equation} 
where $\mu$ is a positive measure:
$\mu([a,b]) \geq 0$ for every interval $[a,b]$ of the positive real axis. 
As indicated in the subscript of the integral sign the range of integration 
is the set of reals satisfying the inequality $0 \leq r < \infty$. In general 
the measure $\mu(\{0\})$ of the one-point set $\{0\}$ is finite and equal
to the equilibrium modulus $G_\infty := \lim_{t\rightarrow\infty} G(t)$. 
An additional assumption
\begin{equation}
\int_{[0,\infty[\;} \frac{\mu(\dd r)}{1 + r} < \infty
\end{equation} 
ensures that the function $G$ is integrable over the interval $[0,1]$ 
\cite{HanDuality}. The relaxation modulus assumes 
a finite value $G(0) = M$ at 0 if the measure $\mu$ has a finite mass $M = \mu([0,\infty[)$.

In contrast to eq.~\eqref{eq:1} the integral representation \eqref{eq:2} 
includes as special cases finite spectra of relaxation times 
corresponding to superpositions of a finite number of Debye elements 
\begin{equation} \label{eq:4}
G(t) = \sum_{n=1}^N c_n \, \e^{-r_n\, t}, \qquad c_n > 0, r_n \geq 0 
\quad\text{for $n = 1,\ldots,N$}
\end{equation}
(Prony sums), infinite discrete spectra corresponding to Dirichlet series as well 
as discrete spectra embedded in continuous spectra.
However the main advantage of \eqref{eq:2} over \eqref{eq:1} is the 
availability of a very powerful mathematical theory which ensures logical 
equivalence of certain statements about material response functions. 

In particular a function satisfying \eqref{eq:2} is completely monotone. 
A function $G(t)$ is
said to be \emph{completely monotone} if it continuously differentiable to every order and
\begin{equation} \label{eq:CM}
(-1)^n \, \D^n \, G(t) \geq 0 \quad\text{for all non-negative integers 
$n$ and $t > 0$}
\end{equation}
Bernstein's theorem \cite{WidderLT,Jacob01I} 
asserts that eq.~\eqref{eq:2} is {\em equivalent} to \eqref{eq:CM}.
There is no such simple characterization of functions which have the integral
representation \eqref{eq:1}. 
Hence the requirement of positive relaxation time spectrum \eqref{eq:2} is
equivalent to the complete monotonicity of the relaxation 
modulus \cite{Bland:VE,Molinari,AnderssenLoy02b,HanDuality}.

For us the main benefit from using \eqref{eq:2} instead of \eqref{eq:1} 
is the equivalence 
of \eqref{eq:2} with a property of the dispersion and attenuation that will be
explained below. That is, certain statements about attenuation and dispersion 
functions follow from \eqref{eq:2} or, conversely, imply that \eqref{eq:2}
does not hold.

We now recall the definition of a Bernstein function \cite{BergForst}. A 
differentiable 
function $f$ on $]0,\infty[$ is a Bernstein function if $f \geq 0$ and 
its derivative 
$f^\prime$ is completely monotone. A Bernstein function is non-negative, 
continuous 
on $]0,\infty[$ and non-decreasing, hence it has a finite value at 0. A 
function $f$ on $[0,\infty[$ which has the form $f(x) = x^2\, \tilde{g}(x)$ 
for some Bernstein function $g$ is called a \emph{complete Bernstein function} (CBF) 
\cite{Jacob01I}. It can be 
proved that every complete Bernstein function is a Bernstein function
\cite{Jacob01I}. 

The following facts about complete Bernstein functions will be needed here. 
\begin{theorem} \label{thm:J}
A real function $f$ on $]0,\infty[$ is a complete Bernstein function if (i) $f$ 
has an analytic continuation $f(z)$
to the complex plane cut along the negative real axis; (ii) $f(0) \geq 0$, (iii) 
$f\left(\overline{z}\right) = \overline{f(z)}$, (iv) $\im f(z) \geq 0$ in the
upper half plane $\im z \geq 0$.
\end{theorem}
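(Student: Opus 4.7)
The plan is to invoke the Nevanlinna--Herglotz integral representation of Pick class functions and then sharpen it, using the analyticity condition (i) and the boundary condition (ii), until the representation takes the standard complete Bernstein form.

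By (iii) and (iv), $f$ extends to a holomorphic function on $\mathbb{C}\setminus\mathbb{R}$ with $\im f\ge 0$ in the upper half plane, i.e.\ $f$ is a Pick/Nevanlinna function. Apply the classical Nevanlinna representation theorem to obtain
\begin{equation*}
f(z)\;=\;\alpha+\beta z+\int_{\mathbb{R}}\!\left(\frac{1}{\lambda-z}-\frac{\lambda}{1+\lambda^{2}}\right)\dd\nu(\lambda),
\end{equation*}
with $\alpha\in\mathbb{R}$, $\beta\ge 0$, and $\nu$ a positive Borel measure satisfying $\int\dd\nu(\lambda)/(1+\lambda^{2})<\infty$. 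The Stieltjes inversion formula identifies $\nu$ with the boundary density of $\im f$, so hypothesis (i) forces $\mathrm{supp}\,\nu\subseteq\,]-\infty,0]$. Substituting $s=-\lambda$ and writing $\mu$ for the pushforward, the representation becomes
\begin{equation*}
f(z)\;=\;\alpha+\beta z+\int_{[0,\infty[}\!\left(-\frac{1}{s+z}+\frac{s}{1+s^{2}}\right)\dd\mu(s).
\end{equation*}

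Next I would use hypothesis (ii) to extract the finer integrability at zero. Letting $z\downarrow 0$ along the positive real axis, monotone convergence in the integrand shows that the finite value $f(0)\ge 0$ can only exist if $\mu(\{0\})=0$ and $\int_{]0,1]}\dd\mu(s)/s<\infty$. Combining this with the identity
\begin{equation*}
-\frac{1}{s+z}\;=\;\frac{1}{s}\!\left(\frac{z}{z+s}-1\right)\qquad (s>0),
\end{equation*}
the Nevanlinna formula reduces to the complete Bernstein form
\begin{equation*}
f(z)\;=\;c+\beta z+\int_{]0,\infty[}\!\frac{z}{z+s}\,\dd\sigma(s),\qquad \dd\sigma(s):=\dd\mu(s)/s,
\end{equation*}
where $c=f(0)\ge 0$ and $\int\dd\sigma(s)/(1+s)<\infty$. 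Finally, inverting the Laplace transform term by term yields $f(z)=z^{2}\tilde g(z)$ with
\begin{equation*}
g(t)\;=\;c+\beta t+\int_{]0,\infty[}\!\frac{1-\e^{-st}}{s}\,\dd\sigma(s),
\end{equation*}
and $g^{\prime}(t)=\beta+\int\e^{-st}\,\dd\sigma(s)$ is completely monotone as the Laplace transform of a positive measure, so $g$ is a Bernstein function and $f$ is complete Bernstein as required.

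The main obstacle is the boundary analysis at the origin. The Nevanlinna condition $\int\dd\nu/(1+\lambda^{2})<\infty$ does not by itself guarantee $\int_{]-1,0]}\dd\nu(\lambda)/|\lambda|<\infty$, so hypothesis (ii) must be used in an essential way: it simultaneously secures the convergence of the rewritten integrand near $s=0$, pins down the non-negativity of the constant $c=f(0)$ in the final representation, and ensures that the algebraic manipulation converting $-1/(s+z)$ into $z/[s(s+z)]$ is legitimate. The rest of the argument is essentially bookkeeping between three equivalent integral forms.
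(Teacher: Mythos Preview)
The paper does not actually prove this theorem: it is quoted as a known characterization of complete Bernstein functions, with the integral representation \eqref{eq:40} attributed to Jacob. Your argument via the Nevanlinna--Herglotz representation, Stieltjes inversion to localize the representing measure on $]-\infty,0]$, and the use of $f(0)\ge 0$ to upgrade the integrability at the origin is precisely the standard proof found in that literature, and it is correct.

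One small slip: in your last displayed formula the constants $c$ and $\beta$ are interchanged. From $f(z)=c+\beta z+\int z/(z+s)\,\dd\sigma(s)$ one has $\tilde g(z)=f(z)/z^{2}=c/z^{2}+\beta/z+\int 1/[z(z+s)]\,\dd\sigma(s)$, whose inverse Laplace transform is
\[
g(t)=c\,t+\beta+\int_{]0,\infty[}\frac{1-\e^{-st}}{s}\,\dd\sigma(s),
\]
so that $g'(t)=c+\int \e^{-st}\,\dd\sigma(s)$. This does not affect your conclusion, since $g'$ is still the Laplace transform of the positive measure $c\,\delta_{0}+\sigma$ and hence completely monotone.
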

Theorem~\ref{thm:J} has the following corollaries\cite{HanSer09}: 
(1) If $f$ is a CBF then $f^\alpha$ is a CBF if $0 \leq \alpha \leq 1$. \\
(2) If $f \not\equiv 0$ then the function $x/f(x)$ is a CBF. 

Every complete Bernstein function $f$ has the integral representation:
\begin{equation} \label{eq:40}
f(x) = a + b\, x + x \int_{]0,\infty[\;} \frac{\nu(\dd r)}{x + r}
\end{equation}
where $a, b \geq 0$ and $\nu$ is a positive measure satisfying the inequality
\begin{equation} \label{eq:5}
\int_{]0,\infty[\;} \frac{\nu(\dd r)}{1 + r} < \infty
\end{equation}
\cite{Jacob01I}.
Note that the integration domain does not include the point 0. The contribution 
of 0 would have the form $\nu(\{0\}) \, x$, hence it could be included in 
the second term on the right-hand side of \eqref{eq:40}. If the point 0 is excluded
then the constants $a, b$ and the measure $\nu$ are uniquely defined by the
function $f$. 

If $G$ is a completely monotone function integrable over $[0,1]$ and 
satisfying eq.~\eqref{eq:2} then 
$$\tilde{G}(p) = \int_{[0,\infty[\;} \frac{\mu(\dd r)}{p + r} =
\frac{\mu(\{0\})}{p} + \int_{]0,\infty[\;} \frac{\mu(\dd r)}{p + r}$$
It follows that
\begin{equation} \label{eq:8} 
\mathcal{Q}(p) := p \, \tilde{G}(p) = \mu(\{0\}) + p \int_{]0,\infty[\;} 
\frac{\mu(\dd r)}{p + r}
\end{equation}
is a complete Bernstein function \cite{HanSer09}.

A function $f$ is called a Stieltjes function if it has an integral representation
\begin{equation}
f(x) = a + \int_{]0,\infty[\;} \frac{\nu(\dd r)}{x + r}
\end{equation}
where $a \geq 0$ and $\nu$ is a positive measure satisfying \eqref{eq:5}.

\section{Dispersion and attenuation.}
\label{sec:DA}

The Green's function $\mathcal{G}(t,x)$ is defined as the solution of 
the problem
\begin{equation}
\rho u_{,tt} = G(t)\ast u_{,txx} + \delta(x)\, \delta(t)\\
\end{equation}
with zero initial data.
In a three-dimensional space 
\begin{multline}  \label{eq:3Dgreen}
\mathcal{G}(t,x) = \\ \frac{-1}{(2 \upi)^3\, r} 
\int_{-\ii \infty + \varepsilon}^{\ii \infty + \varepsilon} \e^{p t} \, 
\frac{1}{\mathcal{Q}(p)}\,\dd p \int_{-\infty}^\infty \, 
\frac{\e^{\ii k r}}{k^2 + B(p)^2}\, k \, \dd k = \\ 
\frac{1}{8 \upi^2 \,\ii\, r} \int_{-\ii \infty + \varepsilon}^
{\ii \infty + \varepsilon} \,\frac{1}{\mathcal{Q}(p)}
\e^{p\, t - B(p) \, r}\, \dd p 
\end{multline}
where
\begin{equation} \label{eq:6} 
B(p) := p\,\frac{\rho^{1/2}}{\mathcal{Q}(p)^{1/2}} 
\end{equation}
It will be assumed that $G$ is completely monotone and that the equilibrium modulus
$G_\infty := \lim_{t \rightarrow \infty} G(t)0$ is positive. The first assumption 
entails that 
$\mathcal{Q}$ is a complete Bernstein function. From the corollaries of 
Theorem~\ref{thm:J} we know that the functions 
$\mathcal{Q}^{1/2}$ and $p/\mathcal{Q}(p)^{1/2}$ are complete Bernstein functions. 
Hence $B$ is a complete Bernstein function. The second assumption entails that 
$\mathcal{Q}(0) = \lim_{p\rightarrow 0} 
\left[ p \,\tilde{G}(p) \right] = G_\infty > 0$, hence $B(0) = 0$. In view of
\eqref{eq:40} the function $B$ has a representation 
$B(p) = c\, p + b(p)$ where $c \geq 0$, the dispersion-attenuation function 
$b$ has the integral representation
\begin{equation} \label{eq:attenuationspectrum}
b(p) := p\, \int_{]0,\infty[\;} \frac{\nu(\dd r)}{p + r}
\end{equation}
and $\nu$ is a positive measure satisfying the inequality \eqref{eq:5}. 
A function $b$ having an integral representation \eqref{eq:attenuationspectrum}
will be called an \emph{admissible dispersion-attenuation function}.
The measure $\nu$ is the spectral measure of the dispersion-attenuation function $b$. 
A dispersion-attenuation function $b$ of a viscoelastic material with a positive 
relaxation spectrum is admissible. The converse statement is not true, as will be 
shown later.

We shall now prove that  
$\lim_{p\rightarrow\infty} B(p)/p = c$.   The last conclusion follows from the
fact that for $p > 1$ 
$$\frac{1}{p + r} \leq \frac{1}{1 + r}$$
and the right-hand side is integrable with respect to the measure $\nu$ in view of  
eq.~\eqref{eq:5}. For $p \rightarrow \infty$ the integrand of 
$$\int_{]0,\infty[\;} \frac{\nu(\dd r)}{p + r}$$
tends to zero, hence, by the Lebesgue Dominated Convergence Theorem, 
the integral tends to zero as well. It follows additionally that 
$b(p) = \mathrm{o}[p]$ for $p \rightarrow \infty$. We now note that
$\lim_{p\rightarrow\infty} \mathcal{Q}(p) = G_0 := \lim_{t\rightarrow 0} G(t)$
is the elastic (instantaneous response) modulus. Hence $c = 1/c_0$, where
$c_0 := \sqrt{G_0/\rho}$. 
Hence eq.~\eqref{eq:3Dgreen} can be recast in a more explicit form 
\begin{equation} \label{eq:Green3D}
\mathcal{G}(t,x) = \frac{1}{8 \upi^2 \,\ii\, r} \int_{-\ii \infty + \varepsilon}^
{\ii \infty + \varepsilon} \,\frac{1}{\mathcal{Q}(p)}
\e^{p\,( t - \vert x \vert/c_0) - b(p) \, r}\, \dd p 
\end{equation}
and the dispersion-attenuation function $b(p)$ has sublinear growth.

We also note that the attenuation function is non-negative in the right-half
of the complex $p$-plane: 
\begin{equation}
\mathcal{A}(p) := \re b(p) = \int_{]0,\infty[\;} 
\frac{\vert p \vert^2 + r \, \re p}
{\vert p + r\vert^2} \nu(\dd r) \geq 0 
\end{equation}
for $\re p \geq 0$. 

The derivative 
\begin{equation}
b^\prime(p) = \frac{b(p)}{p} - \int_{]0,\infty[\;} 
\frac{\nu(\dd r)}{(p + r)^2} \leq \frac{b(p)}{p}
\end{equation}
Moreover 
$$b^\prime(p) = \int_{]0,\infty[\;} \frac{r\, \nu(\dd r)}{(p + r)^2} 
\geq 0$$
On account of \eqref{eq:5} the function $r/(p + r)^2 \leq  
1/(1 + r)$ (for $p > 1$) is integrable with respect to the measure $\nu$
and tends to zero for $p \rightarrow \infty$.  
Hence $b^\prime(p) = \mathrm{o}[1]$ for $p \rightarrow \infty$.

The attenuation function $\mathcal{A}$ and the dispersion function 
$\mathcal{D}(p) :=
-\im b(p)$ satisfy linear dispersion equations in parametric form:
\begin{gather}
\mathcal{A}(p) = \int_{]0,\infty[\;} 
\frac{\vert p \vert^2 + r \, \re p}
{\vert p + r \vert^2} \nu(\dd r)\\
\mathcal{D}(p) = -\im p \int_{]0,\infty[\;} 
\frac{r}{\vert p + r\vert^2} \nu(\dd r)
\end{gather}

The measure $\nu$ represents the dispersion-attenuation spectrum. An elementary 
dispersion-attenuation is represented by the function $p/(p + r)$ for a fixed 
value of $r$. Given a CBF function $b$, the ratio $g(p) := b(p)/p$ is a Stieltjes 
function. If the dispersion-attenuation measure $\nu(\dd r) = n(r)\, \dd r$, 
$n \in \mathcal{L}^1_\loc(\mathbb{R}_+)$,
then the {\em dispersion-attenuation density} $n$ can be calculated by applying 
the formula for inversion of the Stieltjes transform:
\begin{equation} \label{eq:inverseattenuation}
n(r) = \frac{1}{\upi} \lim_{\varepsilon \rightarrow 0+} \,
\im g(-r - \ii \varepsilon) 
\end{equation}
\cite{HanSerPRSA}. More generally, $\im g(p) \geq 0$ if $\im p \leq 0$.
The distributional limit $\lim_{\varepsilon \rightarrow 0+} \,
\im g(-r - \ii \varepsilon)$ is non-negative, hence it is a non-negative measure. 
For a general non-negative measure $\nu$ we thus have the inversion formula
\begin{equation}
\nu([a,b]) = \int_{[a,b]} \im \lim_{\varepsilon \rightarrow 0+} \,
g(-r - \ii \varepsilon)
\end{equation}
for an arbitrary interval $[a,b]$ of the real axis.

Eq.~\eqref{eq:inverseattenuation} can be used to determine the
dispersion-attenuation measure when the dispersion-attenuation function is given
in the form of an analytic function. On the other hand either eq.~\eqref{eq:xxx} 
or eq.~\eqref{eq:xxy} can be used to determine the spectral density $h_1$. This function
is subject to only two constraints: (1) it is non-negative; (2) it must decrease 
faster that $1/\ln(\tau)$ as $\tau \rightarrow \infty$. 

The Green's function can be calculated by an inverse Laplace transform which 
involves 
integration over the imaginary axis $p = -\ii \omega$. It is therefore important to 
investigate the functions $b, \mathcal{D}$ and $\mathcal{A}$ on the imaginary axis in
the $p$ plane. Substituting $p = -\ii \omega$ in the integral representation of $b$
we get the following frequency-domain formulas:
\begin{gather}
\mathcal{A}_1(\omega) := \re b(-\ii \omega) = \omega^2 \int_{]0,\infty[\;} 
\frac{\nu(\dd r)}{\omega^2 + r^2} \label{eq:frAtt}\\
\mathcal{D}_1(\omega) := -\im b(-\ii \omega) = \omega \int_{]0,\infty[\;} 
\frac{r\, \nu(\dd r)}{\omega^2 + r^2} \label{eq:frDisp}
\end{gather}

For completeness we also note that the propagation speed $c(\omega)$ and the
quality factor $Q(\omega)$ can be expressed in terms of the measure $\nu$ 
\begin{equation}
\frac{1}{c(\omega)} = -\im B(-\ii\omega)/\omega \equiv \frac{1}{c_0} + 
\frac{\mathcal{D}_1(\omega)}{\omega} 
\end{equation}
\begin{equation}
Q(\omega) = \frac{\omega}{4 \upi \mathcal{A}_1(\omega)} \left( \frac{1}{c_0} 
+ \mathcal{D}_1(\omega)\right)
\end{equation}

It can now be proved that the attenuation is sublinear as a function of real frequency.
\begin{theorem} \label{thm:attf}
 As $\omega \rightarrow \infty$
\begin{enumerate} 
\item $\mathcal{A}_1(\omega) = \oo[\omega]$ and  $\mathcal{D}_1(\omega) = \oo[\omega]$;\\
\item if $M := \int_{]0,\infty[\;} \nu(\dd r) < \infty$ then $\mathcal{A}_1(\omega) 
\rightarrow M$.
\end{enumerate}
\end{theorem}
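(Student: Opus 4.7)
The plan is to work directly from the integral representations \eqref{eq:frAtt} and \eqref{eq:frDisp}, and reduce everything to the Lebesgue Dominated Convergence Theorem using \eqref{eq:5} as the source of an integrable majorant. The pointwise limits of the integrands are trivial; the only real task is to find dominating functions of the form $C/(1+r)$ that are independent of $\omega$.

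For part~(1), I would rewrite
\[
\frac{\mathcal{A}_1(\omega)}{\omega} = \int_{]0,\infty[\;}\frac{\omega\,\nu(\dd r)}{\omega^2+r^2},\qquad
\frac{\mathcal{D}_1(\omega)}{\omega} = \int_{]0,\infty[\;}\frac{r\,\nu(\dd r)}{\omega^2+r^2},
\]
so that both quantities are integrals of non-negative integrands that tend to $0$ as $\omega\rightarrow\infty$ for every fixed $r>0$. The key elementary estimate is that, for $\omega\geq 1$,
\[
\frac{\omega(1+r)}{\omega^2+r^2}\leq \tfrac{3}{2},\qquad \frac{r(1+r)}{\omega^2+r^2}\leq 2,
\]
each of which follows from $\omega r\leq (\omega^2+r^2)/2$ and $\omega\leq \omega^2$. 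These inequalities show that both integrands are majorized by constants times $1/(1+r)$, which is $\nu$-integrable by \eqref{eq:5}. The Dominated Convergence Theorem then yields $\mathcal{A}_1(\omega)/\omega\rightarrow 0$ and $\mathcal{D}_1(\omega)/\omega\rightarrow 0$.

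For part~(2), if $M:=\nu(]0,\infty[)<\infty$ I would write
\[
\mathcal{A}_1(\omega)=\int_{]0,\infty[\;}\frac{\omega^2}{\omega^2+r^2}\,\nu(\dd r),
\]
observe that the integrand is bounded above by $1$ uniformly in $\omega$, and that $1$ is now $\nu$-integrable by the additional hypothesis. Since $\omega^2/(\omega^2+r^2)\rightarrow 1$ pointwise as $\omega\rightarrow\infty$, Dominated Convergence yields $\mathcal{A}_1(\omega)\rightarrow M$.

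There is no real obstacle here; the only point requiring a moment's thought is the verification of the majorants, and this is completely elementary. The argument essentially parallels the one already used in the excerpt to establish $b(p)=\oo[p]$ as $p\rightarrow\infty$ along the positive real axis, but carried out on the imaginary axis with the explicit form of the real and imaginary parts of $1/(p+r)$.
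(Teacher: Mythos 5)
Your proof is correct and follows essentially the same route as the paper: both reduce each claim to the Lebesgue Dominated Convergence Theorem with a majorant proportional to $1/(1+r)$, whose $\nu$-integrability is supplied by \eqref{eq:5} (or, for part~(2), by the finiteness of $M$ with the constant majorant $1$). The only difference is cosmetic --- you verify the majorant by the direct estimates $\omega r\leq(\omega^2+r^2)/2$ and $\omega\leq\omega^2$, while the paper routes the same bound through the intermediate quantity $\omega/(\omega+r)^2$.
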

\begin{proof}
For $\omega > 1$ 
$$\frac{1}{2} \frac{\omega}{\omega^2 + r^2} \leq \frac{\omega}{(\omega + r)^2} 
\leq \frac{1}{1 + r} \frac{\omega}{\omega + r} \leq \frac{1}{1 + r}$$
The integrand of $\mathcal{A}_1(\omega)/\omega$ is thus bounded by an integrable function
and it tends to 0 for $\omega \rightarrow \infty$. The thesis follows from
the Lebesgue Dominated Convergence Theorem and eq.~\eqref{eq:5}. \index{Lebesgue Dominated Convergence Theorem}

For $M = \int_{]0,\infty[\;} \nu(\dd r) < \infty$ we note that the integrand of $\mathcal{A}_1(\omega)$
tends to 1 and thus $\mathcal{A}_1(\omega) \rightarrow M$ by the Lebesgue Dominated Convergence Theorem. 

For $\omega > 1$ the integrand $r/(\omega^2 + r^2)$ of $\mathcal{D}_1(\omega)/\omega$ is bounded by an integrable function
$$\frac{1}{2} \frac{r}{\omega^2 + r^2} \leq \frac{r}{(\omega + r)^2}
\leq \frac{r}{1 + r} \frac{1}{1 +  r} \leq \frac{1}{1 +  r}$$
Since the integrand tends to 0 for $\omega \rightarrow \infty$, $\mathcal{D}_1(\omega)/\omega = \oo[1]$.
\end{proof}
By similar arguments one can prove that for $\omega \rightarrow 0$
\begin{gather}
\mathcal{A}_1(\omega) = \oo\left[\omega^2\right]\\
\mathcal{D}_1(\omega) = \oo\left[\omega^2\right]
\end{gather}

Since $x/(1 + x)$ is an increasing function, eq.~\eqref{eq:frAtt} implies that the
attenuation function $\mathcal{A}_1(\omega)$ is non-decreasing function of 
$\omega > 0$.

If $\nu$ is absolutely continuous with respect to the logarithmic scale on the
$r$ axis then
there is a locally integrable function $h(r)$ such that $\nu(\dd r) = h(r) \dd r/r$
and $\int_0^\infty [h(r)/(1 + r)] \dd r/r < \infty$. The function $b$ now assumes the form
$b(p) = p \int_0^\infty h(r)/[p + r] \, \dd r$. The variable $r$ can be replaced by 
a variable $\tau = 1/r$ with the  dimension of time. Let $h_1(\tau) :=
h(1/\tau)$. Substituting $p = -\ii \omega$ as well we get a formula used in 
physical acoustics
\begin{equation} \label{eq:xxx}
\mathcal{A}_1(\omega) = \omega^2 \int_0^\infty 
\frac{\tau \,h_1(\tau)}{1 + \tau^2 \, \omega^2} \dd \tau
\end{equation}
and its companion formula 
\begin{equation} \label{eq:xxy}
\mathcal{D}_1(\omega) = \omega \int_0^\infty 
\frac{h_1(\tau)}{1 + \tau^2 \, \omega^2} \dd \tau 
\end{equation}
where $h_1 \geq 0$ and
\begin{equation} \label{eq:xxz}
\int_0^\infty \frac{h_1(\tau)}{1 + \tau} \dd \tau < \infty
\end{equation}

The variable $\tau$ is different
from the relaxation time $\tau = 1/r$ appearing in eq.~\eqref{eq:2}. For example 
in the case of a finite relaxation spectrum $\tau_n = 1/r_n$ (eq.~\eqref{eq:4})
the attenuation spectrum is also discrete but different from the relaxation spectrum.
The attenuation spectrum can be calculated from the relaxation spectrum by solving 
a set of non-linear equations following from the equation 
\begin{equation} \label{eq:QB}
\mathcal{Q}(p) \, B(p)^2 = \rho \, p^2
\end{equation} and conversely.  
 
\section{Implications of finite speed of wave propagation.}

Since $1/\mathcal{Q}(p) = p \, \tilde{J}(p) = J(0) + \widetilde{J^\prime}(p)$,
where $J$ is the creep compliance, the Green's function \eqref{eq:Green3D} 
can be expressed in the form of a convolution
$$\mathcal{G}(t,x) = J^\prime(t)\ast H(t-\vert x \vert/c_0,\vert x\vert)
+ J_0 \, H(t-\vert x \vert/c_0,\vert x\vert)$$
where $H(\tau,r)$ is the inverse Laplace transform of the function
$\e^{-b(p)\,r}/(4 \upi r)$. In terms of the inverse Fourier 
transformation
$$H(\tau,r) = \frac{1}{8 \upi^2 \, r} \int_{-\infty}^\infty
\e^{-\ii \omega \tau - b(-\ii \omega)\, r} \, \dd \omega$$
Note that $b(-\ii \omega) = \overline{b(\ii \omega)}$.
The integrand is square integrable if $r > 0$ and 
\begin{equation} \label{eq:condPW}
\int_0^\infty \e^{-2 \re b(-\ii \omega)\, r} \dd \omega < \infty
\end{equation}

If eq.~\eqref{eq:condPW} holds then 
the Paley-Wiener theorem (Theorem~XII in \cite{PaleyWiener}) can be applied.
By this theorem  
$H(\tau,r) = 0$ for $\tau < 0$
if and only if 
\begin{equation} \label{eq:PW}
\int_{-\infty}^\infty \frac{\re b(-\ii \omega)}{1 + \omega^2} \dd \omega < \infty
\end{equation}
The function $J^\prime$ is causal. Hence, 
if $H(\tau,r) = 0$ for $\tau < 0$ then 
$$\mathcal{G}(t,x) = \int_0^{t-\vert x \vert/c_0} 
J^\prime(s) \, H(t - \vert x \vert/c_0 - s, \vert x \vert) \, \dd s $$
vanishes for $t < \vert x \vert/c_0$.

\begin{figure}
\includegraphics[width=0.75\textwidth]{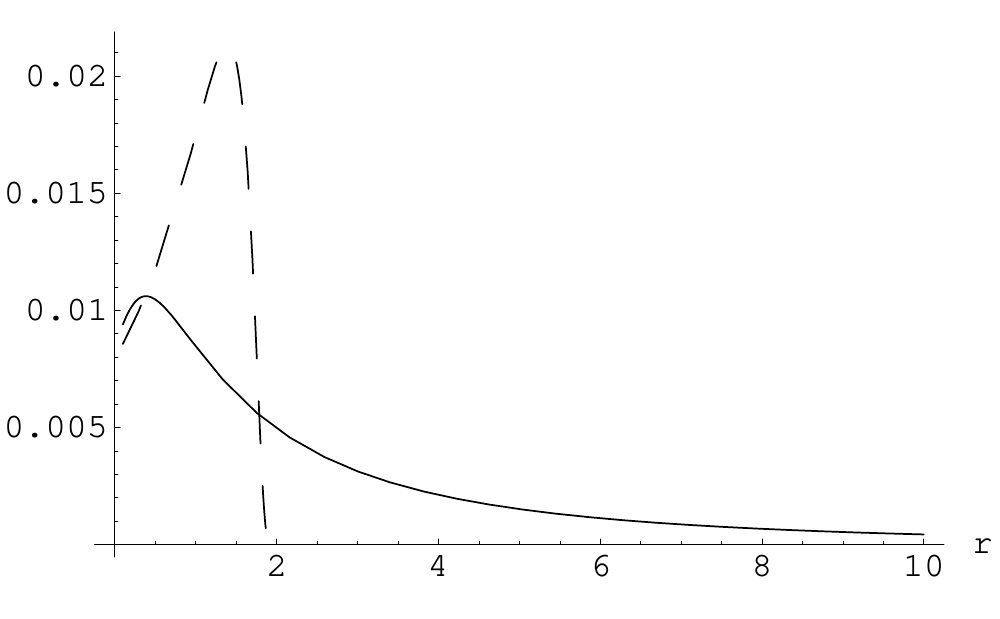}
\caption{Two snapshots of Green's functions for power law attenuation with 
$\alpha=3/2$ (solid line) and $\alpha=1/2$ (dashed line).} 
\label{fig:1}
\end{figure}

In particular, if $b(p) \sim_\infty a\, p^\alpha$, $a > 0$, $\re p \geq 0$,
the inequalities \eqref{eq:condPW} and \eqref{eq:PW} are satisfied if 
$0 < \alpha < 1$. 

It follows from Theorem~\ref{thm:attf} that the asymptotic growth of 
of the attenuation function $\re b(-\ii \omega)$ in the high-frequency range is sublinear.
This is however insufficient to satisfy inequality \eqref{eq:PW}
and vanishing of the wave field ahead of the wavefront $\vert x \vert = c_0\, t$. 
For example for $b(-\ii \omega) \sim_\infty  \omega/\ln^\alpha(\omega)$ 
the integral in \eqref{eq:PW} does not always converge. Since we are interested in
the high-frequency behavior, we can replace $1 + \omega^2$ by $\omega^2$ in the
denominator of \eqref{eq:PW}. 
For $\alpha = 1$ 
$$\int_\e^\omega \frac{\dd y}{y\, \ln(y)} = \ln(\ln(\omega))$$
is unbounded for $\omega \rightarrow \infty$, hence \eqref{eq:PW} is not satisfied. 
For $\alpha \neq 1$
$$\int_\e^\omega \frac{\dd y}{y\, \ln^\alpha(y)} = 
\frac{\ln^{1-\alpha}(\omega)-1}{1 - \alpha}$$
is unbounded for $\alpha < 1$ and bounded for $\alpha > 1$. 
We thus see that in contrast to the power law attenuation, linear growth is not a limit
case for finite propagation speed. Some sublinear cases will also exhibit precursors 
ahead of the wave front. 

In particular the function $b(p) = p/\ln^\alpha(1 + p)$ with 
$0 \leq \alpha \leq 1$ is a CBF with the 
asymptotic properties discussed in the previous paragraph. Indeed,
for $\im p \geq 0$ the argument $\psi = \arg(1 + p)$ satisfies the inequality
$0 \leq \psi \leq \upi$. Hence $\ln(1 + p)$ maps the upper half plane into 
itself and is
non-negative for $p \geq 0$. Hence, by Theorem~\ref{thm:J}, $\ln(1 + p)$ is a CBF.
The same is true for $\ln^\alpha(1 + p)$ if $0 < \alpha \leq 1$.  In view of 
a property of CBF functions mentioned after Theorem~\ref{thm:J} this implies
that $b(p)$ is a CBF if $0 < \alpha \leq 1$. Moreover $b(0) = 0$ and
$\lim_{p\rightarrow \infty} b(p)/p = 0$, hence $b(p)$ is an admissible 
dispersion-attenuation function. 
We thus have produced an example of an admissible dispersion-attenuation 
function $b(p)$ such that $b(-\ii \omega)$ has sublinear growth in the
high frequency range but \eqref{eq:PW} is not satisfied.

\section{Examples of dispersion-attenuation functions.}

\subsection{Power-law attenuation.}

Power-law attenuation is commonly used to match experimental dispersion and attenuation 
data for a wide variety of real viscoelastic materials such as polymers, bio-tissues 
and some viscoelastic fluids. 

Consider the viscoelastic medium defined by the following equation of motion
\begin{equation} \label{eq:Rok}
\rho\, \left(\D_\mathrm{C}^2 + 2 a\, \D_\mathrm{C}^{1+\alpha} + 
a^2\, \D_\mathrm{C}^{2 \alpha} \right) u = A\, \nabla^2 \, u + \delta(t)\, \delta(x)
\end{equation}
with the initial data $u(0,x) = u_{,t}(0,x) = 0$ in $d$ dimensions, $d =1,3$.
$\D_\mathrm{C}^\alpha$ denotes the Caputo fractional derivative of order 
$\alpha$ 
\begin{equation}
\D_\mathrm{C}^\alpha f(x) := \int_0^t \frac{(t-s)^{-\alpha}}{\Gamma(1-\alpha)} 
f^\prime(s) \, \dd s
\end{equation}
\cite{PodlubnyBook}.

It is assumed that $A > 0$, $a \geq 0$, $0 < \alpha < 1$.
The Laplace transformation $\mathcal{L}_t$ with respect to the time variable and the 
Fourier transformation $\mathcal{F}_{x}$
with respect to the spatial variable bring eq.~\eqref{eq:Rok} to the following form
\begin{equation} \label{eq:Rok1}
\rho\,g(p)^2 \, \hat{u}(p,\kk) 
= -A\,\kk^2\,\hat{u}(p,\kk) + 1 
\end{equation}
where $\hat{u} := \mathcal{F}_{x}\left(\mathcal{L}_t(u)\right)$ and 
$g(p) := p + a\, p^\alpha$. Eq.~\eqref{eq:Rok1} implies that in the case of power law attenuation 
$\mathcal{Q}(p) = p\,A/g(p)^2$ and $B(p) = g(p)/c_0$.

We shall begin with solving eq.~\eqref{eq:Rok} in one dimension. Applying the inverse
Fourier transformation to eq.~\eqref{eq:Rok1}:
$$\tilde{u}^{(1)}(p,x) = \frac{1}{2 \upi} \int_{-\infty}^\infty 
\frac{\e^{\ii k x + p t}}{\rho \,g(p)^2 + A\, k^2} \dd k$$
The contour can be closed by a large half-circle in upper-half complex $k$-plane if 
$x > 0$, in the lower- half complex $k$-plane if $x < 0$ and 
$$\frac{1}{A\, k^2 + \rho\,g(p)^2} = \frac{1}{2 \ii g(p)\, A} 
\left[\frac{1}{k - \ii g(p)/c_0} - \frac{1}{k + \ii g(p)/c_0}  \right]$$
where $c_0 := \sqrt{A/\rho}$. We now restrict ourselves to imaginary values of 
$p = -\ii w$. Since $\im [\ii\, g(-\ii w)] = \re g(-\ii w) = a\,\vert w \vert^\alpha \,
\cos((1-\alpha)\,\upi/2) \geq 0$, the residuum at $\pm \ii g(p)/c_0$ contributes if 
$\pm x > 0$. Hence
$$\tilde{u}(p,x) = \frac{1}{2 A} 
\frac{\e^{-g(p) \vert x \vert/c_0 + p t}}{g(p)} $$
and
\begin{equation}
u^{(1)}(t,x) = \frac{1}{4 \upi \ii \, A} \int_{-\ii \infty}^{\ii \infty} \frac{1}{g(p)}
\e^{p \,t - g(p)\vert x \vert/c_0} \dd p
\end{equation}
The solution $u^{(3)}$ of \eqref{eq:Rok} in a three-dimensional space 
is given by the formula
$$u^{(3)}(t,x) = \frac{-1}{2 \upi r} \frac{\partial}{\partial r} 
u^{(1)}(t,r)$$
Hence 
\begin{multline}
u^{(3)}(t,x) = \frac{1}{8 \upi^2 \ii \vert x \vert\, c_0\, A}  
\int_{-\ii \infty}^{\ii \infty} \e^{p \,(t - \vert x \vert/c_0) - 
a \, p^\alpha\, \vert x \vert} \dd p = \\
\frac{1}{4 \upi a^{1/\alpha}\,\vert x \vert^{1+1/\alpha}\, c_0 \, A}\, 
P_\alpha\left((t - \vert x \vert/c_0)/(a \vert x \vert)^\alpha\right)
\end{multline}
where
\begin{equation}
P_\alpha(z) := \frac{1}{2 \upi \ii} \int_{-\ii \infty}^{\ii \infty} 
\e^{y z} \, \e^{-y^\alpha}\, \dd y
\end{equation}
The function $P_\alpha$ is a totally skewed L\'{e}vy stable probability density 
\cite{Sato:Levy,Zolotarev2}. 

For $\alpha = 1/2$ the $\alpha$-stable probability $P_\alpha$
can be expressed in terms of elementary functions. For $\alpha = 1/3, 2/3$ the
$\alpha$-stable probability can be expressed in terms of Airy functions, see 
Ref.\cite{HanQAM}. The Green's functions are easy to calculate explicitly
in these cases. The Green's function vanishes at and outside
the wavefront, as can be seen in the case $\alpha  1/2$ in Fig.~\ref{fig:1}. 
For $\alpha \geq 1$  there is no wavefront and the signal 
peak is preceded by a precursor extending to infinity, as can be seen 
in the case of $\alpha = 3/2$ in 
Fig.~\ref{fig:1}. The limit case, as we shall see,
is the asymptotic behavior $b(p) \sim_\infty a\,p/\ln(p)$. In the limit case 
the propagation speed is unbounded.

Since  $\re b(-\ii \omega) =
a\, \omega^\alpha \, \cos(\upi \alpha/2) > 0$ and for $p = -\ii \omega$ 
$$\left\vert \e^{p\, t - g(p)\, r/c_0} \right\vert = 
\e^{-a\,\omega^\alpha\, \cos(\alpha\,\upi/2)}$$
the integrals
$$f(t,r) := \frac{1}{2 \upi \ii} \int_{-\infty}^\infty
 (-1)^n \,p^{-2}\,p^m\, g(p)^{n+1} \, 
\e^{p\,t - g(p) \, r/c_0} \, \dd p$$
are uniformly convergent for $r > 0$  and 
a theorem in \cite{WhittakerWatson}, Sec.~4.44 implies that
 the derivatives
$\partial^{n+m}\, f/\partial t^n\, \partial r^m$ exist for all positive integers 
$n, m$ and $r > 0$. For $0 < \alpha < 1$ the Green's function vanishes outside
the wavefront and is infinitely differentiable everywhere. Hence the Green's function
vanishes at the wavefront with all its derivatives \cite{HanQAM}.

We shall conclude the section with a spectral characterization of 
the power law attenuation: 
\begin{theorem}
The spectral measure of the dispersion-attenuation function $a\, p^\alpha$, 
$0 \leq \alpha < 1$, is 
\begin{equation} \label{eq:attspepower}
\nu(\dd \xi) = a\,\frac{\sin(\upi \alpha)}{\upi} \xi^{\alpha-1} \, \dd \xi
\end{equation}
\end{theorem}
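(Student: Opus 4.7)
The plan is to apply the Stieltjes inversion formula \eqref{eq:inverseattenuation} to the ratio $g(p) := b(p)/p$, since the paper already establishes that $g$ is a Stieltjes function whenever $b$ is an admissible dispersion-attenuation function. For $b(p) = a\,p^{\alpha}$ with $0 < \alpha < 1$ we simply have $g(p) = a\,p^{\alpha-1}$, which is analytic on the plane cut along the negative real axis, so the boundary value from below the cut is computable by a single branch-of-logarithm calculation.

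Concretely, I would parametrize $-r - \ii\varepsilon = |{-r-\ii\varepsilon}|\,\e^{\ii\theta(\varepsilon)}$ with the principal branch $\theta \in \,]-\upi,\upi]$; for $r>0$ fixed and $\varepsilon\to 0+$ one has $\theta(\varepsilon)\to -\upi$. Then
\begin{equation*}
g(-r-\ii\varepsilon) = a\,(-r-\ii\varepsilon)^{\alpha-1} \longrightarrow a\,r^{\alpha-1}\,\e^{-\ii\upi(\alpha-1)},
\end{equation*}
so taking imaginary parts and using $\sin(\upi(1-\alpha))=\sin(\upi\alpha)$ gives
\begin{equation*}
\lim_{\varepsilon\to 0+}\im g(-r-\ii\varepsilon) = a\,r^{\alpha-1}\,\sin(\upi\alpha).
\end{equation*}
Dividing by $\upi$ as in \eqref{eq:inverseattenuation} yields the density $n(r) = a\,\sin(\upi\alpha)\,r^{\alpha-1}/\upi$, which is exactly \eqref{eq:attspepower}.

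To corroborate the result and confirm that the measure produced really does generate $b$ through the representation \eqref{eq:attenuationspectrum}, I would substitute \eqref{eq:attspepower} into \eqref{eq:attenuationspectrum} and change variables $r = p\,t$ to reduce the integral to the classical Euler identity
\begin{equation*}
\int_0^\infty \frac{t^{\alpha-1}}{1+t}\,\dd t = \frac{\upi}{\sin(\upi\alpha)}, \qquad 0 < \alpha < 1,
\end{equation*}
which produces $p\cdot a\,\sin(\upi\alpha)/\upi \cdot p^{\alpha-1}\cdot \upi/\sin(\upi\alpha) = a\,p^\alpha$. Along the way one checks that $\nu(\dd r) = (a\sin(\upi\alpha)/\upi)\,r^{\alpha-1}\,\dd r$ satisfies the integrability condition \eqref{eq:5}: near $0$ the singularity $r^{\alpha-1}$ is integrable because $\alpha>0$, and near $\infty$ the factor $1/(1+r)$ supplies the needed decay because $\alpha<1$. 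Uniqueness of the measure in the representation \eqref{eq:40} (with the point $\{0\}$ excluded) then forces this to be the spectral measure.

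The only real subtlety is getting the sign of the branch correct in the Stieltjes inversion: the formula approaches the cut from below ($-r - \ii\varepsilon$), so one must not confuse the argument with the $+\upi$ one obtains when approaching from above. The boundary case $\alpha = 0$ in the hypothesis is degenerate (the formula reduces to $\nu\equiv 0$, corresponding to $b\equiv 0$ rather than $b\equiv a$), but it is consistent with the stated formula and does not require a separate argument beyond noting that $b(p)=a$ would sit in the constant term $a$ of \eqref{eq:40} rather than in the integral term.
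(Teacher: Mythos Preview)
Your proof is correct, but your primary method differs from the paper's. The paper does not invoke the Stieltjes inversion formula~\eqref{eq:inverseattenuation}; instead it derives the integral representation directly by composing two Laplace transforms, $y^{-\alpha}=\int_0^\infty \e^{-yz}z^{\alpha-1}\,\dd z/\Gamma(\alpha)$ and $x^{\alpha-1}=\int_0^\infty \e^{-xy}y^{-\alpha}\,\dd y/\Gamma(1-\alpha)$, which together with the reflection formula $\Gamma(\alpha)\Gamma(1-\alpha)=\upi/\sin(\upi\alpha)$ give
\[
x^{\alpha-1}=\frac{\sin(\upi\alpha)}{\upi}\int_0^\infty\frac{z^{\alpha-1}}{x+z}\,\dd z,
\]
and the spectral measure is then read off by uniqueness. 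Your ``corroboration'' step, reducing to the Euler identity $\int_0^\infty t^{\alpha-1}/(1+t)\,\dd t=\upi/\sin(\upi\alpha)$, is in fact essentially the paper's entire argument in reverse; so your second paragraph alone would reproduce the paper's proof. What your inversion-formula approach buys is that it is \emph{constructive}: you compute the density rather than guess it and verify, and it exercises a tool the paper has set up but does not itself use in this proof. The paper's approach, on the other hand, avoids any branch-cut bookkeeping.
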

\begin{proof}
Substituting the Laplace transform 
\begin{equation}
y^{-\alpha} =  \int_0^\infty \e^{-y z} \, 
\left[z^{\alpha-1}/\Gamma(\alpha)\right]\, \dd z
\end{equation}
in the Laplace transform
\begin{equation}
x^{\alpha-1} = \int_0^\infty \e^{-x y} \, 
\left[y^{-\alpha}/\Gamma(1-\alpha)\right]\, \dd y
\end{equation}
yields the Stieltjes transform
$$x^{\alpha - 1} = \frac{\sin(\alpha \upi)}{\upi} 
\int_0^\infty \frac{z^{\alpha-1}}{x + z} \, \dd z$$
which implies eq.~\eqref{eq:attspepower}.
\end{proof}

\subsection{Eq.~\eqref{eq:attenuationspectrum} does not imply that the 
relaxation modulus is completely monotone.}

All the properties of the dispersion-attenuation function $b$ derived in 
Sec.~\ref{sec:DA} follow from the fact that $B(p) = p\,\rho^{1/2}/\mathcal{Q}(p)^{1/2}$ 
is a complete
Bernstein function. It follows from the above property of $B(p)$ that $\mathcal{Q}(p)^{1/2}$ is a 
complete Bernstein function\cite{HanSer09}. On the other hand $\mathcal{Q}$ need not be a complete Bernstein function.
Therefore $G$ need not be a completely monotone function even though $B$ is
a complete Bernstein function. 

An example is provided 
by the power law attenuation model. The function $B$ is a complete Bernstein function
for $0 \leq \alpha < 1$. On the other hand $\mathcal{Q}$ is a complete Bernstein function
and $G$ is completely monotone if and only if $1/ \leq \alpha < 1$.
(Fig.~\ref{fig:0}). 
\begin{figure}
\begin{center}
\includegraphics[width=0.75\linewidth]{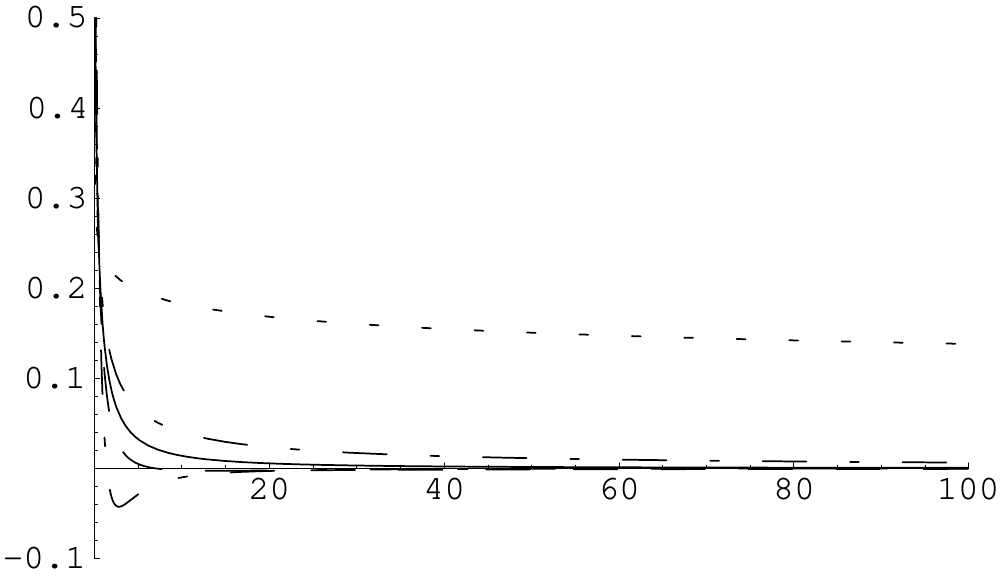}
\end{center}
\caption{The relaxation modulus $G$ for the power law attenuation. 
The exponent values are $\alpha = 0.3, 0.4, 0.5, 0.6, 0.9$,
from bottom to top.}
\label{fig:0}
\end{figure}
Another example is given in Section~\ref{ssec:discrete}..

\subsection{Cytoskeleton rheology.}

The continuum model of cytoskeleton rheology \cite{Mofrad09,DespratAl05} is based on a simple creep compliance in pure extension 
$J(t) = J_1 \, t^\beta/\Gamma(1+\beta)$ with $J_1 > 0$, $0 \leq \beta < 1$. The last inequality follows
from the fact that the creep compliance in a viscoelastic material with a positive
relaxation time spectrum is a Bernstein function\cite{HanDuality}. 
A typical value is $\beta = 0.24$. It follows 
that $\mathcal{Q}(p) = J_1^{\;-1} p^\beta$, $c_0 = \infty$, $B(p) = b(p) = (\rho\, J_1)^{1/2}\,
p^\alpha$ with $\alpha := 1 - \beta/2 < 1$. This model is also confirmed by oscillatory experiments which closely match the storage modulus  
$G^\prime(\omega) = J_1^{-1}\, \omega^\beta\, \cos(\upi \beta/2)$ and loss modulus 
$G^{\prime\prime}(\omega) = J_1^{-1}\, \omega^\beta\, \sin(\upi \beta/2)$ at 
$\omega \cong 1\,\mathrm{Hz}$. In this case $G(t) = G_1\,t^{-\beta}/\Gamma(1-\beta)$, 
where $G_1 = 1/J_1$.

The fractional Kelvin-Voigt model of bio-tissue \cite{ZhangAl07} is defined by 
the relaxation modulus 
\begin{equation} \label{eq:10}
G(t) = G_\infty + G_1 \, t^{-\beta}/\Gamma(1-\beta)
\end{equation}
with $G_\infty, G_1 \geq 0$ and $0 \leq \beta < 1$. 
The last inequality 
follows from the fact that the relaxation modulus 
of a viscoelastic medium with a positive is a completely monotone function. 
In this case $\mathcal{Q}(p) = G_\infty + G_1\, p^\beta$, $c_0 = \infty$ and
$B(p) = b(p) \sim_0 \left[\rho/G_1\right]^{\;1/2}\, p^\alpha$, where 
$\alpha := 1 - \beta/2 < 1$. For $t \rightarrow 0$ the fractional 
Kelvin-Voigt relaxation modulus is asymptotically equivalent to the relaxation 
modulus of the Rouse model of dilute
polymer solutions \citep{Rouse53} and to the relaxation modulus of a  derived in 
\cite{KellyMcGough09} for a ladder model of soft tissue.

In all these cases the propagation speed is unbounded because of an infinite value of
the relaxation modulus at $t = 0$. In all these cases 
the attenuation function follows a sublinear power law.

\subsection{Discrete dispersion-attenuation spectra.}
\label{ssec:discrete}

If the measure $\nu = \sum_{n=1}^N c_n\, \varepsilon_{r_n}$, $r_n, c_n > 0$,
where $\varepsilon_a = \delta(r - a)$ denotes the Dirac measure concentrated at the 
point $a$, then
\begin{equation} \label{eq:7}
b(p) = p \sum_{n=1}^N \frac{c_n}{p + r_n}
\end{equation} 

It is however possible to prove that \eqref{eq:7} implies that $\mathcal{Q}$ is not 
a complete Bernstein function. The proof is not very short although quite elementary.
Therefore eq.~\eqref{eq:7} implies that the relaxation modulus $G$ is not
completely monotone and the relaxation time spectrum is not positive. It follows
that the dispersion-attenuation spectra of viscoelastic media always have 
a non-trivial continuous component. 

\section{Superlinear power law attenuation and the K-K dispersion relations.}
\label{sec:superlinear}

A large number of papers in ultrasound acoustics deal with superlinear power 
law attenuation \cite{Szabo1,Szabo2,SzaboWu00,ChenHolm03,UrbanGreenleaf09}. 
In order to express attenuation in terms of measured dispersion they resort to the 
K-K dispersion relations for the wave number $K(\omega) := 
\ii B(-\ii \omega)$. It is also assumed that a model of acoustic propagation is 
correct (causal) if the K-K relations with some number of subtractions 
are satisfied. 

The acoustic K-K dispersion relations follow from the assumption 
that the reduced wavenumber function $K(\omega) - \omega/c_0$ is the Fourier 
transform of a causal function (or, more generally,
causal tempered distribution) $L(t)$. A priori the function $L(t)$ has no 
physical meaning and the assumption of causality of $L$ is unwarranted.  
A physical meaning can be given to the function $L$ by the assumption
that the equation of motion can be expressed in the following form: 
\begin{equation} \label{eq:motion}
c_0^{\;-2}\, u_{,tt} + (2/c_0)\, L\ast u_{,t} + L\ast L\ast u =
u_{,xx}
\end{equation}
Here 
$$\tilde{L}(p) = B(p) - p/c_0 $$
Causality of the integro-differential operator requires that $L$ is a causal function.
In this case $K(\omega) =  
\omega/c_0 + \ii \, \tilde{L}(-\ii \omega) = 
\ii \omega/c_0 + \hat{L}(\omega)$.  
Eq.~\eqref{eq:motion} is however
incompatible with the viscoelastic constitutive equations. In a viscoelastic 
equation of motion integral operators should act on the Laplacian of $u$.
This kind of causality does not have any physical meaning. 
It is more important that the Green's function is non-causal in the
superlinear case.

In \cite{SzaboWu00} the authors try to guess the viscoelastic constitutive
equation 
compatible with \eqref{eq:motion}. Their approach involves an approximation
of a spatial derivative by a temporal derivative. It is however possible 
to avoid
an approximation by shifting the dispersive terms on the left-hand side 
of \eqref{eq:motion} to the right-hand side. The Laplace transform of the 
left-hand side of eq.~\eqref{eq:motion} is
$$ p^2\, \left[ 1 + c_0\, \tilde{L}(p)/p \right]^2/c_0^{\;2}$$
assuming that $u(0,x) = u_{,t}(0,x) = 0$. 
Hence \eqref{eq:motion} has the form 
$$c_0^{\;-2}\,u_{,tt} = \left[G(t)\ast u_{,tx}\right]_{,x}$$
where $G$ is the inverse Laplace transform of
\begin{equation} \label{eq:xxxa}
p^{-1}\,\left[ 1 + c_0\, \tilde{L}(p)/p \right]^{-2} 
\end{equation}
Expression~\eqref{eq:xxxa} is the Laplace transform of a completely monotone
function if and only if $\left[1 + c_0\, \tilde{L}(p)/p\right]^2/p$ is the 
Laplace transform of a Bernstein function $f$ \cite{HanDuality,HanSerPRSA}. 

For $\tilde{L}(p) = a\, p^\alpha$, with $a, \alpha > 0$, the function $f$ 
assumes the form
$\theta(t) + 2 c_0 \,a\, t_+^{1-\alpha}/\Gamma(1-\alpha) + c_0^2\,a^2\, 
t_+^{2 ( 1-\alpha)}/\Gamma(3-2\alpha)$.
It is obvious that $f$ is a Bernstein function (and $G$ is completely monotone)
if and only if $1/2 \leq \alpha < 1$. We also note that 
$L(t) = a\, t_+^{-\alpha-1}/\Gamma(-\alpha)$ is a distribution.
Convolution with 
$L(t)$  is a Riemann-Liouville fractional differential operator of order 
$\alpha$. 
The order of the fractional differential equation \eqref{eq:motion} 
is $2$ if $\alpha \leq 1$ and $2 \alpha $ if $\alpha > 1$. (In the literature
the highest-order derivative $L\ast L\ast u$ is often incorrectly neglected).
For $\alpha > 1$ the order of the time-differential operator exceeds
the order of the spatial differential operator, hence the equation is 
formally parabolic. 

Superlinear power law attenuation results in a rather strange behavior 
of the dispersion function. 
Note that the attenuation function 
\begin{equation}
\mathcal{A}(\omega) := \re B(-\ii \omega) = a \, \cos(\alpha\, \upi/2) \,
\vert \omega \vert^\alpha
\end{equation} 
is non-negative. Hence, if $0 < \alpha \leq 1$ then the constant $a$ must be 
non-negative. On the other hand, for  
$1 < \alpha < 3$ the constant $a$ must be non-positive. 
As the frequency tends to infinity the frequency-dependent phase  
speed $c(\omega)$, given by the equation  
$1/c(\omega) := \re [\ii B(- \ii \omega)]/\omega = 1/c_0 + 
a \, \vert \omega \vert^{\alpha-1} \, \sin(\alpha\,\upi/2)$
increases to its maximum value $c_0$ if $0 \leq \alpha < 1$ 
("abnormal dispersion"). 
For $1 \leq \alpha < 2$ it increases from $c_0$ at zero frequency to infinity
at a finite frequency 
$\omega_1 = 
\left[ c_0 \, \vert a \vert \, \sin(\alpha\, \upi/2) \right]^{1/(\alpha-1)}$ 
and changes sign. 
For $2 < \alpha < 3$ phase speed decreases from $c_0$ at $\omega = 0$
to 0 at infinite frequency ("normal dispersion"). This behavior of the
phase speed results in the appearance of precursors.

\section{Viscoelastic constitutive equations with a Newtonian viscosity component.}

We shall briefly consider viscoelastic constitutive equations with 
a Newtonian component:
\begin{equation} \label{eq:01}
\upsigma(t) = \int_0^t G(t-s) \, \dot{e}(s)\, \dd s + \eta \, \dot{e}(t)
\end{equation}
where the Newtonian viscosity $\eta > 0$. If $G$ is completely monotone then 
the function 
$\mathcal{Q}(p) = p \, \tilde{G}(p) + \eta\, p$ is
a complete Bernstein function and the theory developed above remains valid except that 
$\lim_{p\rightarrow \infty} \mathcal{Q}(p) = \infty$.  

A particular case is
the hysteretic damping model defined by
the constitutive equation \eqref{eq:01} and eq.~\eqref{eq:10} \cite{FabryAl2001}. 

Since the limits $\lim_{p\rightarrow 0} \,[p \, \tilde{G}(p)] = G_\infty$ and $\lim_{p\rightarrow\infty}\, [p \, \tilde{G}(p)] = G_0$ are finite, 
$B(0) = 0$ 
and $B(p) \sim_\infty (\rho/\eta)^{1/2} \, p^{1/2}$. Thus 
$c = 0$ and $B(p) \equiv b(p)$. This implies that the Green's function does not
have any wave fronts. 
Furthermore it follows from the theory of asymptotics of Stieltjes transforms that 
the dispersion-attenuation measure has an asymptotic behavior 
consistent with the asymptotics of $b(p)$:
\begin{equation}
\nu(\dd r) \sim_0 \frac{(\rho/\eta)^{1/2}}{\upi} \, r^{-1/2} \, \dd r
\end{equation}
Hence high frequency asymptotics of the attenuation and dispersion functions is entirely determined by the Newtonian component:
\begin{equation}
\mathcal{A}_1(\omega) \sim_\infty \frac{1}{\upi}\, (\rho/\eta)^{1/2} \, \omega^2 \int_0^\infty \frac{r^{-1/2}}{r^2 + \omega^2} \dd r = \\  \frac{2}{\upi}\, (\rho/\eta)^{1/2} \,\omega^{1/2}
\end{equation}
Similarly
\begin{equation}
\mathcal{D}_1(\omega) \sim_\infty \frac{1}{\upi}\, (\rho/\eta)^{1/2} \, \omega \int \frac{r^{1/2}}{r^2 + \omega^2} \dd r = \frac{2}{3\,\upi} (\rho/\eta)^{1/2} \,\omega^{1/2}
\end{equation}

The low frequency asymptotics of the attenuation and dispersion is entirely different. 
For small $p$ the term $p \, \tilde{G}(p) \sim_0 G_\infty > 0$
$$B(p) \sim_0 p\, \left[1 + \eta\, p/G_\infty\right]^{-1/2}/c_\infty \sim_0   
p \, \left[1 - \eta\, p/(2 G_\infty)\right]/c_\infty$$
where $c_\infty := \sqrt{G_0/\rho}$.
Hence
$$B(-\ii \omega) = -\ii \omega/c_\infty + \beta(-\ii \omega)$$
where
\begin{equation} \label{eq:asymp}
\beta(-\ii \omega) \sim_0 \frac{\eta}{2 G_\infty\, c_\infty} \omega^2
\end{equation}
This shows that in the low frequency range the attenuation function
has an approximately quadratic frequency dependence.

The above described low- and high-frequency behavior described here has 
been experimentally 
confirmed for superionic glasses, with the high-frequency asymptotic exponent 
$\alpha \approx 0.8$ \cite{CariniAl84}.

The inverse Fourier transform \eqref{eq:3Dgreen} involves the exponential
$$\e^{-\ii \omega\, (t - r/c_\infty)} \, \e^{-\beta(-\ii \omega) \, r}$$
For large $t - r/c_\infty$ eq.~\eqref{eq:asymp} can be substituted in
\eqref{eq:3Dgreen}. Upon the changing
the integration variable $y = \omega\, r^{1/2}$ and setting
$\mathcal{Q}(p) \approx \mathcal{Q}(0) = G_\infty$ in the integrand eq.~\eqref{eq:3Dgreen} assumes the form
\begin{equation} \label{eq:lra}
\mathcal{G}(t,x) = f\left((t-r/c_\infty)/r^{1/2}\right)/r^{3/2}
\end{equation}
where
\begin{equation}
f(z) := \frac{1}{8 \upi^2 \, G_\infty} \int_{-\infty} \e^{-\ii y z} \e^{-\eta\, y^2/(2 G_\infty\, c_\infty)} \, \dd y = 
\frac{\sqrt{2} \, c_\infty^{\;1/2}}{8 \upi^{3/2} \, G_\infty^{\;1/2} \, \eta^{1/2}}\,\e^{- \eta\, z^2/(8 G_\infty\, c_\infty)}\,
\end{equation}
According to eq.~\eqref{eq:lra}  the Green's function in the long-range 
assumes the form of a Gaussian pulse traveling with the speed $c_\infty$. The quadratic 
dependence of the attenuation can be associated with the $r^{1/2}$ scaling of
the pulse width.   

A rigorous theory of long-range asymptotics of viscoelastic Green's functions in the
absence of Newtonian viscosity is presented in \cite{HanAsympVE}. In these cases 
the the pulse form and the scaling of the pulse width are different.

\section{Experimentally measured superlinear power law attenuation in 
 an intermediate frequency range.}

The restrictions imposed on the dissipation-attenuation exponent $\alpha$ by 
the positive relaxation spectrum and by finite propagation speed apply to 
the asymptotic behavior of $b(p)$ at infinity. At low frequencies the exponent 
is close to 2. It is likely that the experimentally observed superlinear power law 
behavior applies to an intermediate frequency range. 

For limited frequency bands very good fits of viscoelastic models 
to superlinear power laws can be obtained. For example, the Voigt model 
\begin{equation}
\sigma = G_0 \, e + \eta \, \dot{e} 
\end{equation}
is a special case of the class of viscoelastic models considered here. In this case
\begin{gather}
\mathcal{A}_1(\omega) = \sqrt{\frac{\rho \, \omega^2\,\left(\sqrt{G_0^{\;2} 
+ \eta^2\, \omega^2}-G_0\right)}{2 (G_0^{\;2} + \eta^2 \, \omega^2)}}\\
c(\omega) = \sqrt{\frac{2 \left( G_0^{\;2} + \eta^2\, \omega^2\right)}
{\rho\, \left(G_0 + \sqrt{G_0^{\;2} + \eta^2\, \omega^2}\right)}}
\end{gather}
Both $\mathcal{A}_1(\omega)$ and $c(\omega)$ have a $\omega^{1/2}$ behavior in 
the high-frequency range,
while $\mathcal{A}_1(\omega) \sim_0 \left[ \rho^{1/2}\, \eta\right]/
\left[2 \ G_0^{\;3/2}\right] \omega^2$, $c(\omega) \sim_0 c_0$ for $\omega \rightarrow 0$.

In \cite{UrbanGreenleaf09} exponential power law attenuation was matched 
with Voigt models in the frequency range 50--600 Hz. 
For $\rho = 1000\, \mathrm{kg}\,\mathrm{m}^{-3}$, $G_0 = 2.5$  and $10\,\mathrm{kPa}$,
$\eta = 0.5, 2, 4\, \mathrm{Pa\, s}$ a reasonable fit was reported 
with exponent values close to $1.5$.
 It is however pointed out in \cite{MobleyWatersMiller03}  
that the frequency range covered in data acquisition available with
current technology is not adequate for demonstrating that the attenuation function 
satisfies a power law with an exponent in the range between 1 and 2.

Let $\beta(\omega) := \ln[\mathcal{A}_1(\omega)/\mathcal{A}_1(1)]/\ln(\omega)$ 
denote the local exponent of the attenuation function. $\beta$ is an analytic function.
Consequently if $\beta(\omega)$ is constant over a finite range of frequencies then
it is constant over the entire frequency range $0 < \omega < \infty$ and the attenuation
function $\mathcal{A}_1$ satisfies a power law. In Figs~\ref{fig:3}--\ref{fig:5} the 
local exponent has been plotted for the data of Ref.\cite{UrbanGreenleaf09} 
for the frequency range $50\div 600\,\mathrm{Hz}$ as well as for the
corresponding kilo-Hertz and megahertz ranges. Approximately constant 
exponents are observed over frequency ranges of logarithmic width $ \ln(50)$.
The solid line represents the exponent for $G_0 = 10\, \mathrm{kPa}$ and 
$\eta = 0.5 \,\mathrm{Pa\,s}$, the dashed line corresponds to 
$G_0 = 10\, \mathrm{kPa}$ and 
$\eta = 4 \,\mathrm{Pa\,s}$, and $G_0 = 2.5\, \mathrm{kPa}$ and 
$\eta = 0.5 \,\mathrm{Pa\,s}$.
\begin{figure}
\includegraphics[width=0.75\textwidth]{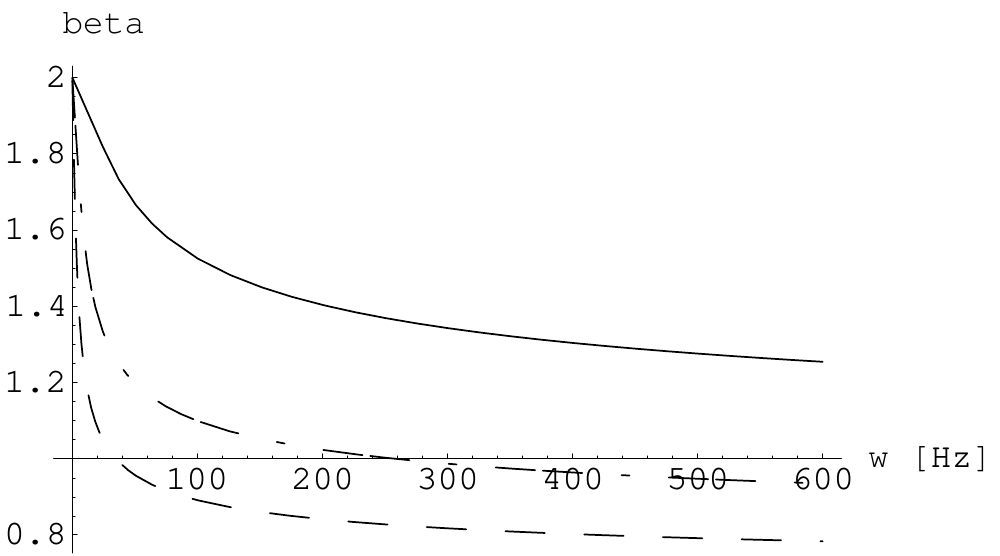}
\begin{center}
\caption{Local exponent in the frequency range $0\div 600~\mathrm{Hz}$.} \label{fig:3}
\end{center}
\end{figure}
\begin{figure}
\includegraphics[width=0.75\textwidth]{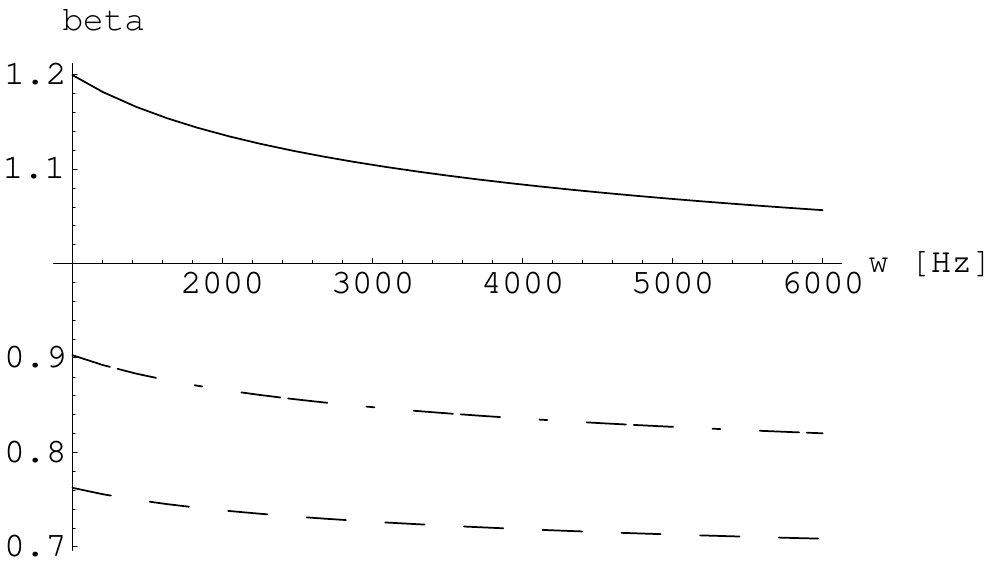}
\begin{center}
\caption{Local exponent in the frequency range $1 \div 6 \, \mathrm{kHz}$.} \label{fig:4}
\end{center}
\end{figure}
\begin{figure}
\includegraphics[width=0.75\textwidth]{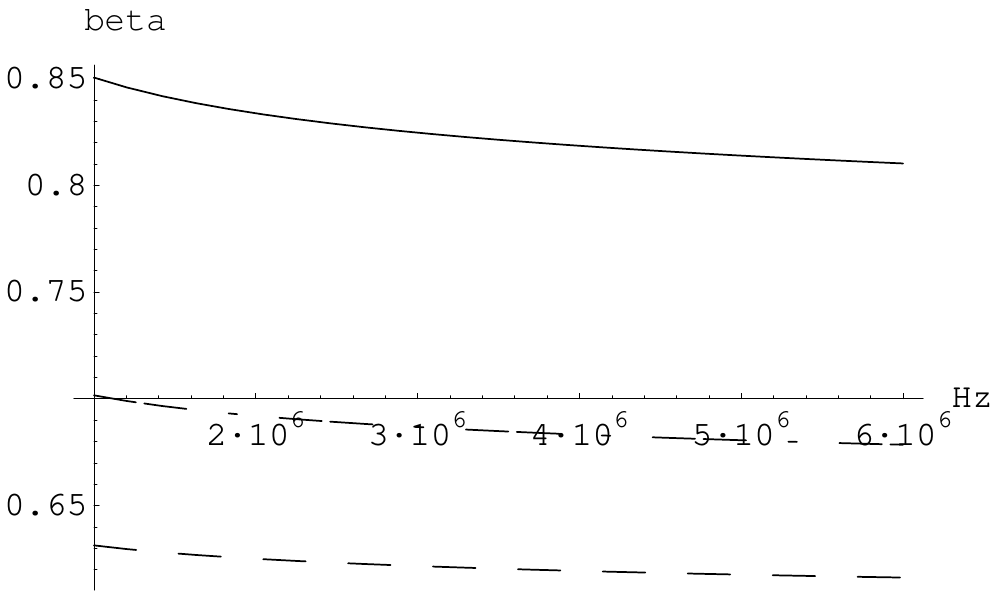}
\begin{center}
\caption{Local exponent in the frequency range $1 \div 6 \, \mathrm{MHz}$.}\label{fig:5}
\end{center}
\end{figure}

\section{Conclusions.}

The attenuation and dispersion functions of a viscoelastic medium with 
positive relaxation spectrum have integral representations in terms of a 
positive Radon measure. Each positive Radon measure satisfying the growth
condition \eqref{eq:5} determines an admissible attenuation function
and the corresponding dispersion function. The attenuation function
has a sublinear attenuation growth for $\omega \rightarrow \infty$. 

Superlinear growth of logarithmic attenuation rate 
for $\omega \rightarrow \infty$ is incompatible with hereditary viscoelastic constitutive 
equations with positive relaxation spectrum. It is also incompatible with 
finite propagation speed. Superlinear 
growth of attenuation results in a precursor of infinite extent preceding the main signal.

Superlinear dependence of attenuation on frequency observed in some viscous
materials is possible in the intermediate frequency range. 

\bibliography{mrabbrev,mathnew12,ownnew12}

\end{document}